\newtheorem{theorem}{Theorem}
\newtheorem{conjecture}{Conjecture}
\newtheorem{lemma}{Lemma}
\DeclareMathOperator*{\E}{E}
\DeclareMathOperator*{\sspan}{span}
\newcommand{\bbF}{\mathbb{F}}
\newcommand{\bbN}{\mathbb{N}}
\newcommand{\<}{\langle}
\renewcommand{\>}{\rangle}
\newcommand{\poly}{\mathrm{poly}}
\title{The PFR Conjecture Holds for\\Two Opposing Special Cases}
\author{Thomas Holenstein}
\begin{document}
\maketitle
\begin{abstract}
Let $A \subseteq \bbF_2^n$ be a set with $|2A| = K|A|$.
We prove that if 
\begin{enumerate}
\item[(1)]
for at least a fraction $1-K^{-9}$ of all $s \in 2A$, 
the set $(A+s) \cap A$ has size
\emph{at most} $L\cdot|A|/K$, or
\item[(2)] for at least a fraction $K^{-L}$ of all $s \in 2A$,
the set $(A+s) \cap A$ has size \emph{at least} $|A|\cdot(1- K^{-1/L})$,
\end{enumerate}
then there is a subset $B \subseteq A$ of size $|A|/K^{O_L(1)}$ such 
that $\sspan(B) \leq K^{O_L(1)}\cdot|A|$.
\end{abstract}
\section{Introduction}
If $A$ is a subset of an abelian group $G$ (which will always 
be $\bbF_2^n$ in this paper), we let 
$A+A := 2A := \{a_1+a_2 | a_1,a_2 \in A\}$.
Suppose that $|A+A| = K|A|$ (where one should think of $K$ as 
``small'' compared to $|A|$).
Intuitively, one expects $A$ to have a lot of structure in this case.
The Freiman-Ruzsa Theorem (which we quote in a 
a version implied by \cite{EveZoh12}), gives such structure.
\begin{theorem}[Freiman-Ruzsa]
Let $A \subseteq \bbF_2^{n}$ be a set with $|2A| = K|A|$.
Then, $\sspan(A) \leq 2^{2K}|A|$.
\end{theorem}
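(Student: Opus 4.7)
My plan is to deduce the stated form of the Freiman--Ruzsa theorem by combining Ruzsa's covering lemma with a quotient argument. First, by translating $A$ I may assume $0 \in A$; this preserves $|A|$ and $|2A|$ and makes $\sspan(A)$ a linear subspace of $\bbF_2^n$ containing $A$.

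The central step is Ruzsa's covering lemma applied to the hypothesis $|A + A| = K|A|$. Since $-A = A$ in characteristic $2$, this yields a set $X \subseteq A$ with $|X| \leq K$ such that $A \subseteq X + 2A$. Setting $V := \sspan(X)$ gives $|V| \leq 2^{|X|} \leq 2^K$. Passing to the quotient $\pi : \bbF_2^n \to \bbF_2^n / V$, the inclusion $A \subseteq X + 2A$ becomes $\pi(A) \subseteq 2\pi(A)$. Together with $0 \in \pi(A)$, this yields an ascending chain $\pi(A) \subseteq 2\pi(A) \subseteq 3\pi(A) \subseteq \cdots$ in the quotient that stabilizes at $\sspan(\pi(A)) = \pi(\sspan(A))$.

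It then suffices to bound $|\sspan(\pi(A))| \leq 2^K |A|$, since combined with the identity $|\sspan(A)| = |V| \cdot |\sspan(\pi(A))|$ and $|V| \leq 2^K$ this produces the desired $|\sspan(A)| \leq 2^{2K} |A|$. To establish the quotient bound, I would exploit the strong structural inclusion $\pi(A) \subseteq 2\pi(A)$: intuitively, this says $\pi(A)$ is ``almost'' a subgroup, so $\sspan(\pi(A))$ cannot be much larger than $\pi(A)$. A quantitative version should follow from a further application of Ruzsa covering inside the quotient, together with a Plünnecke--Ruzsa estimate on iterated sumsets of $\pi(A)$ controlling the rate at which the ascending chain stabilizes.

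The main obstacle is quantitative: a naive iteration of Ruzsa covering compounds factors of $K$ and can produce a dependence worse than $2^{2K}$. Getting the sharp constant requires a careful combinatorial accounting of how the cosets of $V$ overlap while filling up $\sspan(A)$, so as to avoid double counting. I expect this accounting to be the technical heart of the argument, while the reduction to a chain of sumsets in the quotient group is essentially standard.
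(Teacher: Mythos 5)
The paper does not prove this theorem; it imports it from Even--Zohar \cite{EveZoh12}, where the sharp dependence of $\sspan(A)/|A|$ on $K$ is determined by a compression argument, of which $2^{2K}$ is a clean corollary. So there is no in-paper proof to compare against, and I am evaluating your sketch on its own terms.

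The central step of your sketch has a genuine gap: the inclusion $\pi(A)\subseteq 2\pi(A)$ carries no information whatsoever. Once you have translated so that $0\in A$, you have $0\in\pi(A)$, and then $\pi(A)=\pi(A)+0\subseteq\pi(A)+\pi(A)=2\pi(A)$ holds for \emph{every} set, with no input from Ruzsa covering at all. It certainly does not say that $\pi(A)$ is ``almost a subgroup,'' and it cannot be the engine that bounds $\sspan(\pi(A))$. What you would actually need in the quotient is some control of the form $2\pi(A)\subseteq\pi(A)$ (making $\pi(A)$ a subgroup, so $\sspan(\pi(A))=\pi(A)$), i.e.\ $2A\subseteq A+V$ upstairs, and Ruzsa covering of $A$ by translates of $A-A=2A$ gives nothing of the kind. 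The closest you can get via covering is to cover $2A$ by translates of $2A$: apply Ruzsa covering to $A'=2A$, $B=A$ using Pl\"unnecke--Ruzsa ($|3A|\le K^2|A|$) to obtain $X\subseteq 2A$ with $|X|\le K^2$ and $2A\subseteq X+2A$. That is a nontrivial statement and does let an induction close ($kA\subseteq V+2A$ for all $k$), but it forces $|V|\le 2^{|X|}\le 2^{K^2}$, so the best it can produce is a bound of the shape $2^{K^2}|A|$, which is qualitatively worse than $2^{2K}|A|$. Your final paragraph acknowledges the quantitative obstacle, but I would stress that no amount of careful coset bookkeeping repairs this: the covering route simply does not yield the exponent $2K$, and the cited theorem is obtained by a genuinely different (compression-based) method.
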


The exponential loss in $K$ is necessary, as can be seen
by considering the set $A \subseteq \bbF_2^{n}$, which 
contains all $a$ with exactly one $1$ in the first $t$ positions for 
an appropriately chosen~$t$.
In fact, the theorem given in \cite{EveZoh12} describes $\sspan(A)/|A|$
correctly in dependence of $|2A|/|A|$.

One can conjecture that the exponential loss is unneccessary
if one is allowed to take a large subset of $A$.
This is called the Polynomial Freiman-Ruzsa Conjecture, and seems
to be due to Marton (see \cite{Ruzsa93}).
\begin{conjecture}[Polynomial Freiman-Ruzsa Conjecture]\label{con:pfr}
There exists $c \in \bbN$ such that, for any 
set $A \subseteq \bbF_2^{n}$ with $|2A| = K|A|$
there is a subset $A' \subseteq A$ such that 
$|A'| \geq K^{-c} |A|$ and $\sspan(A') \leq K^{c}|A|$.
\end{conjecture}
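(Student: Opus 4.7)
The plan is to attack the Polynomial Freiman--Ruzsa Conjecture via a dichotomy argument centered on the additive multiplicity function $r(s) := |A \cap (A+s)|$ for $s \in 2A$. Since $\sum_{s \in 2A} r(s) = |A|^2$, the average value of $r$ equals $|A|/K$, while its pointwise maximum is $|A|$. Viewed through this lens, the two hypotheses in the abstract describe opposite tails of the distribution of $r$: hypothesis~(1) asserts that $r$ is within a constant factor of its mean for almost all $s$, whereas hypothesis~(2) asserts that a non-negligible fraction of $s$ attain $r(s)$ essentially at the maximum. The paper's two theorems thus handle the two extremes of this distribution, and the task is to show that the general case reduces to them.

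My strategy is to reduce an arbitrary set $A$ with $|2A| = K|A|$ to one of these two regimes by a dyadic level-set analysis. Decompose $2A$ into the sets $S_j := \{s \in 2A : 2^{j-1}|A|/K < r(s) \leq 2^j |A|/K\}$ for $j = 0, 1, \dots, \lceil \log_2 K \rceil$; the identity $\sum_j 2^j |S_j| \leq K|A|$ constrains how mass can be distributed. If the mass concentrates on the low levels $j = O(\log L)$, then most $s \in 2A$ satisfy $r(s) \leq L|A|/K$, triggering hypothesis~(1). If instead a non-negligible fraction of the mass sits on the topmost levels, where $2^j \approx K$, then a $K^{-O(1)}$ fraction of $s$ satisfy $r(s) \geq \alpha |A|$ for some constant $\alpha$; a short argument exploiting that such $s$ act as approximate stabilizers of $A$ (so that iterating their action closes $A$ up into a near-invariant subset) should boost $\alpha$ up to $1 - K^{-1/L}$, triggering hypothesis~(2). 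In either case the corresponding theorem from the abstract produces the desired subset, and the $K^{O_L(1)}$ bounds compose to the polynomial form demanded by Conjecture~\ref{con:pfr}.

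Before invoking the dichotomy I would first apply the Pl\"unnecke--Ruzsa inequalities to control iterated sumsets of $A$, and, if the distribution of $r$ is too irregular, a Balog--Szemer\'edi--Gowers refinement to pass to a subset on which multiplicities behave tamely. The \emph{main obstacle} is the intermediate regime: the dyadic mass could concentrate on a level $S_j$ with $2^j$ much larger than any constant $L$ yet much smaller than $K^{1-1/L}$, so that neither hypothesis holds. The special cases proved in this paper leave exactly this gap, and closing it appears to require genuinely new input---for instance, a recursive application of the theorem to a quotient of $A$ by the subgroup generated by the high-multiplicity elements, or an entropy-compression argument---rather than a direct reduction. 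This intermediate regime is presumably where the conjecture's known difficulty resides, which is why only the two extreme cases are settled here.
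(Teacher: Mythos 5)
The statement you were asked to prove is a \emph{conjecture}: the paper states it as Conjecture~\ref{con:pfr} and does not prove it. The two theorems of the paper (Theorems~\ref{thm:unstructured} and~\ref{thm:structured}) establish only the two special cases described in the abstract, and the paper explicitly cites Sanders (Theorem~\ref{thm:sanders}) as the best known general result. So there is no ``paper's own proof'' to compare against, and to your credit you do not claim to have one either --- your last paragraph correctly identifies the intermediate regime as an open gap, which is exactly where the difficulty of the conjecture lies.

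That said, a few points in your reduction outline deserve sharper scrutiny, because they suggest the dichotomy is closer to being airtight than it really is. First, hypothesis~(1) requires the fraction of $s$ with $|A(s)| > L|A|/K$ to be at most $K^{-9}$, which is far more demanding than ``most of the mass sits at low dyadic levels''; a set can easily have $99\%$ of its representation mass in the low levels and still have a $K^{-2}$ fraction of $s$ with large $|A(s)|$, so hypothesis~(1) would not apply. Second, and more seriously, your claimed ``short argument'' boosting an $r(s) \geq \alpha|A|$ tail (for constant $\alpha$) into an $r(s) \geq (1 - K^{-1/L})|A|$ tail is unsubstantiated; elements with $r(s) \geq \alpha|A|$ are only weak approximate periods, and iterating them does not obviously produce near-exact periods without losing control of the doubling constant. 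If such a boosting argument were available, it would already be a substantial contribution toward the conjecture. Finally, passing to a quotient by the subgroup generated by high-multiplicity elements, or applying Balog--Szemer\'edi--Gowers to regularize the distribution of $r$, both lose polynomial factors that stack up under recursion in a way you would need to control explicitly. In short: your diagnosis of why the two theorems do not combine to a proof is sound, but the partial reductions you sketch are themselves not proved and would need genuine new arguments.
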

The strongest result in this direction was given by
Sanders \cite{Sanders12}.
\begin{theorem}[Sanders]\label{thm:sanders}
For any set $A \subseteq \bbF_2^{n}$ with $|2A| = K|A|$
there is a subset $A' \subseteq A$ such that 
$|A'| \geq 2^{-\log^{4}(K)} |A|$ and $\sspan(A') |A|$.
\end{theorem}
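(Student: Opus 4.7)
The plan is to prove Theorem~\ref{thm:sanders} by combining the Croot--Sisask almost periodicity method with a Bogolyubov-type Fourier argument, which together form the core of Sanders' approach. The first step is to use the Plünnecke--Ruzsa inequality to note that any iterated sumset $mA - nA$ has size at most $K^{m+n}|A|$, so passing to constant-fold sumsets only costs polynomially in $K$. This lets us look for structure inside $8A-8A$, say, without losing much.

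The key technical input is the \emph{Croot--Sisask almost periodicity lemma}: for any $\epsilon > 0$ and integer $k \geq 1$, there is a set $T \subseteq A - A$ with $|T| \geq |A|/K^{O(k/\epsilon^2)}$ such that the normalized convolution $1_A^{*k}/|A|^{k-1}$ is $\epsilon$-almost translation-invariant under every element of $T$. A crucial feature is that iterated sumsets of $T$ still consist of approximate periods with only slowly degrading parameters, which lets us enlarge $T$ by passing to, say, $2T - 2T$ without leaving the regime of almost-invariance.

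I would then apply a Bogolyubov-type argument to extract a subspace: the large Fourier coefficients of $1_T$ concentrate on a small set $\Lambda$, and its annihilator $H = \Lambda^{\perp}$ is a large subspace that is forced to lie inside $T + T - T - T \subseteq 8A - 8A$ once $\epsilon$ is small enough compared to the density. Balancing $\epsilon \sim 1/\log K$ and $k \sim \log^{2} K$, so that approximate invariance upgrades to a genuine subspace while $k/\epsilon^{2} \sim \log^{4} K$ controls the loss in $|T|$, yields a subspace $H$ with $|H| \geq 2^{-\log^{4}(K)}|A|$ inside $O(1)$-fold iterated sums of $A$. A standard Ruzsa covering argument then produces a large $A' \subseteq A$ contained in few cosets of $H$, giving the stated bound on $\sspan(A')$.

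The main obstacle is the quantitative bookkeeping in the middle step: achieving the fourth-power logarithm, rather than a larger power, requires the refined iteration of almost periodicity that is the central innovation of Sanders. I would expect the bulk of the work to lie in verifying that one can truly close the loop with these parameters, tracking the interplay between $\epsilon$, $k$, and the number of sumset iterations; the Plünnecke reduction at the start and the Ruzsa covering at the end are essentially routine by comparison.
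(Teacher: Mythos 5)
This theorem is not proved in the paper: it is quoted as background from Sanders \cite{Sanders12}, and the only remark the paper makes about its proof is that it rests on the Croot--Sisask result \cite{CroSis10}. There is therefore no ``paper's own proof'' to compare your attempt against. (Incidentally, the statement as transcribed in the paper is garbled --- the conclusion ``$\sspan(A') |A|$'' is missing the intended bound, presumably $\sspan(A') \leq 2^{O(\log^4 K)}|A|$.)

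As a description of how Sanders' argument goes, your outline correctly names the major ingredients --- Pl\"unnecke--Ruzsa to control iterated sumsets, Croot--Sisask almost periodicity, a Bogolyubov-type spectral step to extract a large subspace from a high-fold sumset, and Ruzsa covering to pull the conclusion back to a subset of $A$ --- and your parameter heuristic $\epsilon \sim 1/\log K$, $k \sim \log^2 K$, $k/\epsilon^2 \sim \log^4 K$ is the right order of magnitude for where the fourth power of the logarithm comes from. But as a proof this has a genuine gap, which you yourself flag: the entire content of Sanders' theorem is the ``middle step'' you defer, namely showing that $L^p$-almost-periodicity of convolutions with the chosen $k$ and $\epsilon$ really does force a bona fide subspace of the claimed density into a bounded-fold sumset, and that the iteration closes with only a $\log^4 K$ loss rather than something worse. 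Without carrying out that bookkeeping (which is the technical heart of \cite{Sanders12} and is far from routine), what you have is an accurate summary of the strategy, not a proof of the theorem.
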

The proof of Theorem~\ref{thm:sanders} is based 
on a result by Crook and Sisask \cite{CroSis10}.

We remark that the validity of Conjecture~\ref{con:pfr} was
proven in the special case that $A$ is a downset \cite{GreTao09}.

\subsection{Our Contributions}
For a condition $C$, we use $\<C\>$ as Iverson bracket, i.e.,
$\<C\>$ is $1$ if $C$ is satisfied, and 0 otherwise.

For $s \in 2A$, let $A(s) := A \cap (s+A) = \{a \in A :
\exists a' \in A: s = a+a'\}$.
Because $\sum_{s\in 2A} |A(s)| =
\sum_{s \in 2A} \sum_{a_1 \in A} \<\exists a_2: a_1 + a_2 = s\>
=\sum_{s \in 2A} \sum_{a_1,a_2 \in A} \<a_1 + a_2 = s\>
=\sum_{a_1,a_2 \in A}1 = |A|^2$ we see that 
$\E_{s \in 2A}[A(s)] = |A|/K$.

We show that the PFR conjecture holds for two special 
cases.
In Theorem~\ref{thm:unstructured}, 
we show that in case most (namely, a fraction $1-K^{-9}$) of the
sets $A(s)$ have size close to the minimum they can have on average
(i.e., $|A|/K$), then the PFR conjecture holds.
In Theorem~\ref{thm:structured}, we prove that if a large
fraction of the sets $A(s)$ have size close to the maximum they can have
(namely $|A|$), then the PFR conjecture holds.

For both theorems it is easy to find examples of sets $A$ which
satisfy the conditions of the theorem.
\subsection{Preliminaries}

We use the Balog-Szemer\'edi-Gowers Theorem, whose formulation 
we take from \cite{BeLoZe12}.
\begin{theorem}[Balog-Szemer\'edi-Gowers]\label{thm:bsg}
There exist polynomials $f(x,y)$, $g(x,y)$ such that the following
holds for every subset $A$ of an abelian group $G$.
If $A$ satisfies $\Pr_{a_1,a_2 \in A}[a_1 + a_2 \in S] \geq \frac{1}{K}$
for some $S\subseteq G$, $|S| \leq C|A|$, then there is $A' \subseteq A$
such that $|A'| \geq |A|/f(K,C)$ and $|A'+A'| \leq g(K,C)|A|$.
\end{theorem}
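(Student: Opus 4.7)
The plan is to follow the classical graph-theoretic proof of Balog--Szemer\'edi--Gowers, which is especially transparent in $\bbF_2^n$ where addition and subtraction coincide.

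First I would encode the hypothesis as a graph. Form $\Gamma$ on vertex set $A$ with $\{a_1,a_2\}$ an edge iff $a_1+a_2\in S$. The assumption $\Pr_{a_1,a_2\in A}[a_1+a_2\in S]\ge 1/K$ says that $\Gamma$ has at least $|A|^2/(2K)$ edges, i.e.\ edge density $\delta\ge 1/(2K)$.

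Second I would apply the BSG graph-purification lemma: any graph on $n$ vertices with edge density $\delta$ contains a subset $V'$ with $|V'|\ge \poly(\delta)\cdot n$ such that every pair $(u,v)\in V'\times V'$ is joined by at least $\poly(\delta)\cdot n^2$ walks of length $3$. The standard way to prove this is to sample random vertices $x^*,y^*\in V$, take $V'$ to be those $v$ sharing many common neighbors with both $x^*$ and $y^*$, use a Cauchy--Schwarz/Markov computation to show that most pairs in $V'$ enjoy many length-$3$ walks, and then prune the vertices lying in too many ``light'' pairs.

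Third I would read off the sumset bound. A walk $a\to x\to y\to b$ in $\Gamma$ has $a+x,x+y,y+b\in S$, so over $\bbF_2$ one gets $a+b=(a+x)+(x+y)+(y+b)$. Distinct walks joining a fixed pair $(a,b)$ produce distinct triples $(s_1,s_2,s_3)\in S^3$ summing to $a+b$, since $s_1$ already determines $x$ and $s_2$ then determines $y$. Writing $r_3(c):=|\{(s_1,s_2,s_3)\in S^3:s_1+s_2+s_3=c\}|$, every $c\in A'+A'$ then satisfies $r_3(c)\ge M$, where $M=\poly(\delta)\cdot|A|^2$ is the walk-count guaranteed by step two. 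Hence
\[
|A'+A'|\cdot M\;\le\;\sum_c r_3(c)\;=\;|S|^3\;\le\;(C|A|)^3,
\]
and so $|A'+A'|\le \poly(K,C)\cdot|A|$, as required.

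The main obstacle is step two: establishing the path-abundance lemma with \emph{polynomial} dependence on $\delta=1/K$, not merely qualitative dependence. Tracking how the size of $V'$, the fraction of good pairs, and the per-pair walk-count lower bound simultaneously degrade through the random-sampling and subsequent pruning is what pins down the explicit polynomials $f(K,C)$ and $g(K,C)$ in the statement.
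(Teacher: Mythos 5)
The paper does not prove Theorem~\ref{thm:bsg}; it is quoted as a black-box result from the cited reference, so there is no in-paper argument for your sketch to be compared against. Your outline is the standard graph-theoretic route (Gowers' refinement of Balog--Szemer\'edi), and steps one and three are correct and cleanly argued: the graph encoding with edge density $\geq 1/(2K)$ is right, and the injection from length-$3$ walks joining $(a,b)$ into triples of $S^3$ summing to $a+b$ is a sound way to convert walk-abundance into the bound $|A'+A'|\le (C|A|)^3/M$.

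Two remarks. First, the substantive gap is the one you name yourself: the quantitative paths-of-length-$3$ lemma in step two is stated but not proved, and essentially all of the technical content of BSG lives there. A sketch of the form ``sample $x^*,y^*$, restrict to heavy common-neighbourhoods, Cauchy--Schwarz, prune light pairs'' is the right template, but without carrying the dependencies through one cannot certify that $f$ and $g$ come out polynomial; since the whole point of citing BSG here is the polynomial loss, this step cannot be left implicit. Second, you restrict to $\bbF_2^n$, while the theorem as stated is for an arbitrary abelian group $G$. This is only a cosmetic restriction: for a walk $a\to x\to y\to b$ one has $a+b=(a+x)-(x+y)+(y+b)$ in any abelian group, so each walk still produces a triple $(s_1,s_2,s_3)\in S^3$ with a fixed signed sum equal to $a+b$; the map walk $\mapsto(s_1,s_2,s_3)$ remains injective (recover $x=s_1-a$, then $y=s_2-x$), and the counting bound $\sum_c r_3(c)=|S|^3$ is unchanged. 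Stating it this way from the outset makes the proof match the theorem's generality at no extra cost.
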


\section{The Unstructured Case}
\begin{theorem}\label{thm:unstructured}
Let $A \subseteq \bbF_2^n$, $|A+A| = K|A|$.
If 
\begin{align}\label{eq:2}
\Pr_{a_1,a_2\in A}
\Bigl[|A(a_1+a_2)| \geq \frac{L|A|}{K}\Bigr] \leq \frac{1}{K^{8}}
\end{align}
then there exists $B \subseteq A$ with $|B| \geq |A|/\poly(L)$ 
such that $\sspan(B) \leq |A+A|2^{\poly(L)}$.
\end{theorem}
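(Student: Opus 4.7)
My plan is to derive a near-tight bound on the additive energy of $A$ from the hypothesis and then combine it with Theorem~\ref{thm:bsg} and Freiman--Ruzsa to produce a subset $B$ of small self-doubling.

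First, I would establish the energy bound $E(A) := \sum_{s \in 2A} |A(s)|^2 \leq 2L|A|^3/K$. Splitting the sum according to whether $|A(s)| \leq L|A|/K$, and using $\sum_s |A(s)| = |A|^2$ together with the hypothesis $\sum_{s\text{ heavy}} |A(s)| \leq K^{-8}|A|^2$,
\[
E(A) \;\leq\; \frac{L|A|}{K}\cdot|A|^2 \;+\; |A|\cdot K^{-8}|A|^2 \;\leq\; \frac{2L|A|^3}{K}.
\]
This is within a factor of $2L$ of the Cauchy--Schwarz lower bound $|A|^3/K$. Equivalently, $\|1_A \ast 1_A - (|A|/K)\cdot 1_{2A}\|_2^2 \leq (2L-1)|A|^3/K$, so the convolution is $\ell^2$-close to the constant $|A|/K$ on its support.

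Next I would extract $B \subseteq A$ with $|B| \geq |A|/\poly(L)$ and $|B+B| \leq \poly(L)\cdot|B|$ by applying Theorem~\ref{thm:bsg}. For the output to depend only polynomially on $L$ (and not on $K$), I need a set $S \subseteq 2A$ with both $|S| \leq \poly(L)\cdot|A|$ and $\Pr_{a_1,a_2}[a_1+a_2 \in S] \geq 1/\poly(L)$. My approach would be to partition $2A$ dyadically into level sets $S_j := \{s : |A(s)| \in [2^j, 2^{j+1})\}$; pigeonholing over the $O(\log L)$ scales in the ``typical'' range $[|A|/(2K),\,L|A|/K]$ singles out a level carrying a $1/O(\log L)$ fraction of the pair mass. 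The crux is then to show, using the $\ell^2$-closeness above, that the relevant level has size only $\poly(L)\cdot|A|$ rather than the naive $\Theta(K|A|)$; this is where the near-tightness of the energy bound has to be used in an essential way.

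Finally, once $B$ has polynomial self-doubling, the Freiman--Ruzsa theorem gives $\sspan(B) \leq 2^{\poly(L)}|B| \leq 2^{\poly(L)}|A| \leq 2^{\poly(L)}|A+A|$ (using $|A+A| \geq |A|$), which completes the proof.

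The principal obstacle is the middle step: both the full light-sum set $\{s : |A(s)| \leq L|A|/K\}$ and the natural dyadic level sets have size $\Theta(K|A|)$ by elementary counting, so a direct application of Theorem~\ref{thm:bsg} yields only $\poly(K)$ parameters. Extracting a truly $\poly(L)$-sized popular set requires a more delicate argument that exploits the $\ell^2$-closeness of $1_A \ast 1_A$ to a scalar multiple of $1_{2A}$, perhaps via a Fourier-analytic refinement or by applying BSG on $A+A$ (where the closeness translates into small self-doubling of $A+A$) rather than on $A$ itself.
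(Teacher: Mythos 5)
Your energy bound $E(A) \leq 2L|A|^3/K$ is correct but is not where the main difficulty lies, and you have honestly identified the real obstacle: applying Theorem~\ref{thm:bsg} directly to $A$ requires a popular set $S$ of size $\poly(L)\cdot|A|$, which cannot exist since the typical dyadic level of $|A(s)|$ already has size $\Theta(K|A|)$. Your closing suggestion --- apply BSG on $A+A$ rather than $A$ --- is the right direction, but the argument is not completed, and the paper's completion needs ideas your sketch does not contain.

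Concretely, the paper applies Theorem~\ref{thm:bsg} to the set $C := \{s : |A|/(2K) \leq |A(s)| \leq L|A|/K\} \subseteq 2A$ (with $S_{\mathrm{BSG}} = 2A$); a counting argument shows $|C| \geq |2A|/(4L)$, so the ratio $|S|/|C|$ is only $\poly(L)$. The crux is then to establish $\Pr_{\gamma_1,\gamma_2 \in C}[\gamma_1+\gamma_2 \in 2A] \geq 1/\poly(L)$, and this does \emph{not} follow from the energy bound alone. The paper proves it by a second-moment argument on the random variable $Y = |A(\beta_1+\beta_2) \cap A(\beta_3+\beta_4)|$ (with $\beta_i$ uniform in $A$), whose second moment is controlled by a bijection (Lemma~\ref{lem:keyStep}) identifying $\{(\beta_1,\beta_2) : \{a_1,a_2\} \subseteq A(\beta_1+\beta_2)\}$ with $\{(c_1,\beta_2) : c_1 \in A(a_1+a_2),\, \beta_2 \in A(a_1+c_1)\}$; this is a fourth-moment-type estimate, strictly more information than $\sum_s |A(s)|^2$. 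Your $\ell^2$-closeness reformulation packages the second moment of $|A(s)|$ nicely, but gives no handle on $\Pr[\{a_1,a_2\} \subseteq A(\beta_1+\beta_2)]$, which is what actually enters. Finally, even once BSG and Freiman--Ruzsa give a subset $C' \subseteq 2A$ of small span, one still needs to pull this back to a subset of $A$: the paper does this via a graph on $A$ with edges $\{a_1,a_2\}$ whenever $a_1+a_2 \in C'$, taking a large connected component. This last translation step is absent from your plan (you jump from a small-doubling subset of $2A$ directly to a subset $B \subseteq A$).

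So: the framing and the identification of the obstacle are sound, but the two load-bearing ideas --- the bijection-based fourth-moment bound yielding popularity of sums within $C$, and the graph-component argument transferring the conclusion from $2A$ back to $A$ --- are both missing, and the energy bound you substitute for them is genuinely weaker than what is needed.
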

We note that statement (1) in the abstract is implied by 
Theorem~\ref{thm:unstructured}.
To see this, we observe that
$\Pr_{a_1,a_2\in A}
\bigl[|A(a_1+a_2)| \geq \frac{L|A|}{K}\bigr]
\leq
K\cdot \Pr_{s \in 2A}
\bigl[|A(s)| \geq \frac{L|A|}{K}\bigr]$
because $\Pr_{a_1,a_2 \in A}[a_1 + a_2 = s] \leq \frac{1}{|A|} \leq 
K \Pr_{s' \in 2A}[s = s']$ for any $s$.

\medskip

The idea of the proof of Theorem~\ref{thm:unstructured}
is to pick values $\beta_1,\beta_2,\beta_3,\beta_4$ uniformly at random 
from $A$, and show that (\ref{eq:2}) 
implies that with probability depending only on $L$ we have
\begin{align}\label{eq:3}
A(\beta_1+\beta_2) \cap A(\beta_3+\beta_4) \neq \emptyset.
\end{align}
Suppose now that indeed~(\ref{eq:3}) holds, and let $a$ be in the intersection.
Then, because $\beta_1+\beta_2 = a+a_1$ and $\beta_3+\beta_4 = a+a_2$, 
we see that $\beta_1+\beta_2+\beta_3+\beta_4 = a_1 + a_2 \in 2A$.
Thus, in this case we get that
$\Pr_{s_1,s_2 \in 2A}[s_1+s_2 \in 2A]$ only depends on $L$
(actually this is not quite true, as the distribution is somewhat wrong,
but we ignore this for this informal discussion).
Thus, Theorem~\ref{thm:bsg} gives us a subset of $2A$ whose doubling constant 
only depends on~$L$. 
By the Freiman-Ruzsa Theorem, the span of this subset is as small as required,
and with somewhat more work we can
get a large subset of $A$ which also has small span.

In the remainder of this section, we show that (\ref{eq:3}) happens 
with probability dependent only on $L$.
After this, we make the above intuition formal to get the result.

To prove that~(\ref{eq:3}) happens with probability dependent only
on $L$, we define the random variable
\begin{align}
Y &:= |A(\beta_1+\beta_2) \cap A(\beta_3+\beta_4)|\nonumber\\
&\phantom{:}=
\sum_{a \in A} \<a \in A(\beta_1+\beta_2)\> \<a \in A(\beta_3+\beta_4)\>.
\end{align}
and use $\Pr[Y>0] \geq \frac{\E^2[Y]}{\E^2[Y]}$.

For technical reasons (the ``somewhat more work'' above), 
we later need actually that 
not only (\ref{eq:3}) happens, but at the same time
the sets $A(\beta_1+\beta_2)$ and $A(\beta_3+\beta_4)$ are not too small.
Because of this, we actually work with the the random variable $Z$:
\begin{align}
Z := \begin{cases}
0 & \text{if $|A({\beta_1+\beta_2})| \leq \frac{|A|}{2K}$ or 
$|A({\beta_3+\beta_4})| \leq \frac{|A|}{2K}$}\\
Y & \text{otherwise}.
\end{cases}
\end{align}
We will show that~(\ref{eq:2}) implies $\Pr[Z>0] \geq \Omega(L^{-4})$.
For this, we first show that a simple application of Cauchy-Schwarz 
gives a good lower bound on $\E[Z]$.

\begin{lemma}\label{lem:lowerBoundOnZ}
If $A \subseteq \bbF_2^n$, $|A+A| = K|A|$, 
then $\E[Z] \geq \frac{|A|}{16 K^2}$
(where the expectation is
over the uniform random choice of $\beta_1$ to $\beta_4$ in $A$).
\end{lemma}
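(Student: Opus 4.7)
The plan is to rewrite $\E[Z]$ as a sum of squares indexed by $a \in A$ and then apply Cauchy--Schwarz once. For each $a \in A$, define
\[
P_a := \Pr\bigl[a \in A(\beta_1+\beta_2) \text{ and } |A(\beta_1+\beta_2)| > |A|/(2K)\bigr].
\]
Since $(\beta_1,\beta_2)$ is independent of $(\beta_3,\beta_4)$, expanding the definition of $Z$ and applying linearity of expectation gives $\E[Z] = \sum_{a \in A} P_a^2$. Cauchy--Schwarz then yields $\E[Z] \geq \frac{1}{|A|}\bigl(\sum_{a \in A} P_a\bigr)^2$.

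Next, I would lower bound $\sum_{a \in A} P_a$. Because $|A(s)|$ counts the ordered pairs $(a_1, a_2) \in A^2$ with $a_1 + a_2 = s$, we have $\Pr[\beta_1+\beta_2 = s] = |A(s)|/|A|^2$, and summing over $a \in A$ collapses the first indicator in $P_a$ to a factor of $|A(s)|$, giving
\[
\sum_{a \in A} P_a = \sum_{s \,:\, |A(s)| > |A|/(2K)} \frac{|A(s)|^2}{|A|^2}.
\]
Two standard estimates now finish the job. First, Cauchy--Schwarz applied to $\sum_{s \in 2A} |A(s)| = |A|^2$ over $|2A| = K|A|$ terms gives $\sum_s |A(s)|^2 \geq |A|^3/K$. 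Second, the contribution from $s$ with $|A(s)| \leq |A|/(2K)$ is at most $(|A|/(2K)) \sum_s |A(s)| = |A|^3/(2K)$. Subtracting shows that the surviving sum is at least $|A|^3/(2K)$, hence $\sum_{a \in A} P_a \geq |A|/(2K)$ and $\E[Z] \geq |A|/(4K^2)$, which is stronger than the claimed $|A|/(16K^2)$.

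The main (mild) obstacle is handling the truncation that defines $Z$. A direct estimate of the form $\E[Y \cdot \<\text{bad event}\>] \leq |A|/(2K)$ is too weak to conclude $\E[Z] = \Omega(|A|/K^2)$, since $|A|/K$ dominates $|A|/K^2$ for $K > 1$. The approach above avoids this issue by folding the truncation into the per-$a$ probabilities \emph{before} applying Cauchy--Schwarz; the choice of threshold at half of the mean $|A|/K$ ensures that the truncation discards at most half of the second-moment mass of $|A(\cdot)|$.
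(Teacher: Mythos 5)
Your proposal is correct and, in its opening moves, identical to the paper's: both rewrite $\E[Z] = \sum_{a\in A} P_a^2$ (using independence of $(\beta_1,\beta_2)$ and $(\beta_3,\beta_4)$) and then apply Cauchy--Schwarz to get $\E[Z] \geq \bigl(\sum_a P_a\bigr)^2/|A|$. Where you diverge is in lower bounding $\sum_a P_a$. The paper uses a crude Markov-type argument: at least half the pairs $(\beta_1,\beta_2)$ satisfy $|A(\beta_1+\beta_2)| \geq |A|/(2K)$, and each such pair contributes at least $|A|/(2K)$ indicators, giving $\sum_a P_a \geq |A|/(4K)$ and hence $\E[Z]\geq |A|/(16K^2)$. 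You instead re-index by $s = \beta_1+\beta_2$ to write $\sum_a P_a = \sum_{s:|A(s)|>|A|/(2K)} |A(s)|^2/|A|^2$, lower bound the full second moment $\sum_s |A(s)|^2 \geq |A|^3/K$ by a second Cauchy--Schwarz over the $K|A|$ terms, and subtract the small-$s$ contribution, which is at most half of that by the choice of threshold. This yields $\sum_a P_a \geq |A|/(2K)$ and $\E[Z]\geq |A|/(4K^2)$, a factor-of-$4$ improvement over the paper's constant; more to the point, it makes transparent \emph{why} the threshold is set at half the mean (to preserve half of the second-moment mass), a point the paper does not articulate. Either argument suffices for the theorem, but yours is slightly sharper and more self-explanatory, at the cost of one extra re-indexing step.
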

\begin{proof}
We see that
\begin{align}
\E[Z] &= 
\E\Bigl[ \sum_{a \in A}
\<a \in A({\beta_1+\beta_2}) \land |A({\beta_1+\beta_2})|\geq\tfrac{|A|}{2K} \>
\times\nonumber\\
&\qquad\qquad\qquad\qquad
\<a \in A({\beta_3+\beta_4}) \land |A({\beta_3+\beta_4})|\geq\tfrac{|A|}{2K}\>\Bigr]
\\
& =
 \sum_{a \in A}
\Pr\Bigl[a \in A({\beta_1+\beta_2}) \land |A({\beta_1+\beta_2})|\geq\frac{|A|}{2K}\Bigr]^2\\
&\geq
\frac{
\Bigl(\sum_{a \in A}
\Pr[a \in A({\beta_1+\beta_2}) \land |A({\beta_1+\beta_2})|\geq\frac{|A|}{2K}]\Bigr)^2}{|A|}\label{eq:1}
\end{align}

We now observe that
\begin{align}
\Pr_{\beta_1,\beta_2 \in A}\Bigl[|A({\beta_1+\beta_2})| \leq \frac{|A|}{2K}\Bigr]
\leq
\frac{|A+A| \frac{|A|}{2K}}{|A|^2} = \frac12.
\end{align}
and so
\begin{align}
\MoveEqLeft{\sum_{a \in A} 
\Pr\Bigl[a \in A({\beta_1+\beta_2}) \land |A({\beta_1+\beta_2})| 
\geq \tfrac{|A|}{2K}
\Bigr]}\\
&\geq\frac{1}{|A|^2}
\sum_{\beta_1,\beta_2: |A({\beta_1+\beta_2})| \geq |A|/2K} 
\sum_{a}  \<a \in A({\beta_1+\beta_2})\> \\
&\geq\frac{1}{|A|^2}
\sum_{\beta_1,\beta_2: |A({\beta_1+\beta_2})| \geq |A|/2K} 
\frac{|A|}{2K} 
\geq \frac{|A|}{4K}
\end{align}
Inserting into (\ref{eq:1}) gives the lemma.
\end{proof}

We next prove that $\E[Z^2]$ is small; of course, it suffices to show
that $\E[Y^2]$ is small.

\begin{lemma}\label{lem:YsquareIsSmall}
Let $A \subseteq \bbF_2^n$, $|A+A| = K|A|$.
Suppose that \begin{align}
\Pr_{a_1,a_2 \in A}\Bigl[|A(a_1+a_2)| \geq \frac{L|A|}{K}\Bigr] \leq 
\frac{1}{K^{8}}.
\end{align}
Then,
$E[Y^2] \leq 6\frac{L^4|A|^2}{K^4}$.
\end{lemma}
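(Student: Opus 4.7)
The plan is to expand $\E[Y^2]$ using independence of $(\beta_1,\beta_2)$ and $(\beta_3,\beta_4)$, rewrite the resulting local probabilities in terms of the representation function $r_A(s) := |A(s)|$ (the number of ordered pairs in $A^2$ summing to $s$), and then split every sum on the ``heavy'' set $H := \{s : r_A(s) > L|A|/K\}$, which by the hypothesis satisfies $\sum_{s \in H} r_A(s) \leq |A|^2/K^8$. Expanding the square that defines $Y$ gives
\[\E[Y^2] = \sum_{a,a' \in A} p(a,a')^2, \qquad p(a,a') := \Pr_{\beta_1, \beta_2 \in A}[a, a' \in A(\beta_1+\beta_2)].\]

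To evaluate $p(a,a')$, note that $a \in A(\beta_1+\beta_2)$ is equivalent to $\beta_1+\beta_2 \in A+a$, so $p(a,a')$ is the probability that $\beta_1+\beta_2 \in (A+a) \cap (A+a')$. The map $s \mapsto (s+a, s+a')$ bijects this intersection with pairs in $A^2$ summing to $a+a'$, so $|(A+a)\cap(A+a')| = r_A(a+a')$. Combined with $\Pr[\beta_1+\beta_2 = s] = r_A(s)/|A|^2$ and a split of the sum on $H$, this yields
\[p(a,a') = \frac{1}{|A|^2}\sum_{s \in (A+a) \cap (A+a')} r_A(s) \leq \frac{L\, r_A(a+a')}{K|A|} + \frac{1}{K^8},\]
where the first term bounds the ``light'' contribution (using $r_A(s) \leq L|A|/K$ together with $|(A+a)\cap(A+a')| = r_A(a+a')$) and the second uses the hypothesis.

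Squaring via $(x+y)^2 \leq 2x^2 + 2y^2$ and summing over $a, a' \in A$, the identity $\sum_{(a,a') \in A^2} r_A(a+a')^2 = \sum_u r_A(u)^3$ (valid since there are $r_A(u)$ ordered pairs $(a,a') \in A^2$ with $a+a' = u$) reduces matters to bounding $\sum_u r_A(u)^3$. Splitting this second sum on $H$ as well: the light part is at most $(L|A|/K)^2 \sum_u r_A(u) = L^2|A|^4/K^2$, and the heavy part at most $|A|^2 \sum_{u \in H} r_A(u) \leq |A|^4/K^8$, so in the meaningful regime $L \geq 1$ we get $\sum_u r_A(u)^3 \leq 2L^2|A|^4/K^2$. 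Putting everything together gives $\E[Y^2] \leq 4L^4|A|^2/K^4 + 2|A|^2/K^{16} \leq 6L^4|A|^2/K^4$, the final inequality using $K, L \geq 1$.

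The argument is a clean ``second moment'' computation, and no serious obstacle is anticipated. The only slightly non-obvious step is the use of the same threshold $L|A|/K$ in both splittings --- this is precisely the threshold the hypothesis hands to us, and it is exactly what makes each ``light part'' dominate the corresponding ``heavy part.''
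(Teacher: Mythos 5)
Your proposal is correct, and it recovers the paper's key identity by a slightly different (arguably cleaner) route. The heart of the paper's argument is Lemma~\ref{lem:keyStep}, proved there via the bijection $(\beta_1,\beta_2)\mapsto(a_1+\beta_1+\beta_2,\beta_2)$; you arrive at the same identity
\[
\Pr[\{a,a'\}\subseteq A(\beta_1+\beta_2)]=\frac{1}{|A|^2}\sum_{c\in A(a+a')}|A(a+c)|
\]
by observing that the event is $\beta_1+\beta_2\in(A+a)\cap(A+a')$ and that $s\mapsto(s+a,s+a')$ bijects $(A+a)\cap(A+a')$ with the ordered representations of $a+a'$. Where you genuinely diverge is afterward. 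The paper isolates bad sets explicitly: it defines $B(a)$ as the set of $b$ with $|A(a+b)|\geq L|A|/K$, then $A'$ as the set of $a$ with $|B(a)|\geq|A|/K^4$, shows $|A'|\leq|A|/K^4$, and runs a case distinction on whether $a_1\in A'$ and $a_2\in B(a_1)$. You instead split the single sum $\sum_s r_A(s)$ (and later $\sum_u r_A(u)^3$) on the one heavy set $H=\{s:r_A(s)\geq L|A|/K\}$, using $\sum_{s\in H}r_A(s)\leq|A|^2/K^8$ both times. The two approaches yield the same constant after similar arithmetic, and both tacitly assume $L\geq 1$ (the paper at the step $(\,|A|^2/K^4+L^2|A|^2/K^2\,)^2\leq 4L^4|A|^4/K^4$; you when absorbing $|A|^4/K^8$ into $L^2|A|^4/K^2$), which is harmless since the lemma is only applied in that regime. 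Your uniform $\ell^1$ split avoids the auxiliary-set bookkeeping and is somewhat tidier; the paper's bad-set formulation makes it more transparent which pairs $(a_1,a_2)$ are being given the trivial bound $1$.
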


We sketch the proof of Lemma~\ref{lem:YsquareIsSmall}: 
linearity of expectation gives
\begin{align}
\E[Y^2] &= \E\Bigl[\sum_{a_1,a_2 \in A} \<\{a_1,a_2\} \subseteq A({\beta_1+\beta_2})\>
\<\{a_1,a_2\} \subseteq A({\beta_3+\beta_4})\>\Bigr]\nonumber\\
&=
\sum_{a_1,a_2 \in A} \Pr[\{a_1,a_2\} \subseteq A({\beta_1+\beta_2})]^2 \label{eq:4}
\end{align}
In order to give an upper bound on the probabilities in this expression,
the following lemma is key.
\begin{lemma}\label{lem:keyStep}
For any $A \subseteq \bbF_2^n$ and any $a_1,a_2 \in A$ we have
\begin{align}
&\bigl|\{(\beta_1,\beta_2) \in A\times A : \{a_1,a_2\} \subseteq A({\beta_1+\beta_2})\}\bigr|\nonumber\\
&\qquad =
\bigl|\{(c_1,\beta_2) \in A \times A : c_1 \in A({a_1+a_2}) \land \beta_2 \in A({a_1+c_1})\bigr|\;.\label{eq:222}
\end{align}
\end{lemma}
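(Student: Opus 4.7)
The plan is to prove this by exhibiting an explicit bijection between the two sets. Unpacking $A(s)=\{a\in A:a+s\in A\}$, the left-hand set consists of pairs $(\beta_1,\beta_2)\in A\times A$ such that both $\beta_1+\beta_2+a_1$ and $\beta_1+\beta_2+a_2$ lie in $A$, while the right-hand set consists of pairs $(c_1,\beta_2)\in A\times A$ with $c_1+a_1+a_2\in A$ and $\beta_2+a_1+c_1\in A$.

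The substitution I would use is $(\beta_1,\beta_2)\mapsto(c_1,\beta_2)$ with $c_1:=\beta_1+\beta_2+a_1$. The condition $a_1\in A(\beta_1+\beta_2)$ gives $c_1\in A$ immediately; the condition $a_2\in A(\beta_1+\beta_2)$ becomes $c_1+a_1+a_2=\beta_1+\beta_2+a_2\in A$, so $c_1\in A(a_1+a_2)$; and $\beta_2+a_1+c_1=\beta_1\in A$, which together with $\beta_2\in A$ gives $\beta_2\in A(a_1+c_1)$. For the inverse, given $(c_1,\beta_2)$ in the right-hand set I would set $\beta_1:=a_1+c_1+\beta_2$; then $\beta_2\in A(a_1+c_1)$ is literally $\beta_1\in A$, and the two right-hand conditions unwind to $\beta_1+\beta_2+a_1=c_1\in A$ and $\beta_1+\beta_2+a_2=c_1+a_1+a_2\in A$, recovering the two left-hand membership conditions.

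There is no real obstacle here: the statement is a change-of-variables identity in $\bbF_2^n$, and the substitution is essentially forced by the asymmetric appearance of $a_1$ on the right-hand side. The conceptual payoff, which presumably drives the use of this lemma in Lemma~\ref{lem:YsquareIsSmall} via (\ref{eq:4}), is that it rewrites the probability $\Pr[\{a_1,a_2\}\subseteq A(\beta_1+\beta_2)]$ as a count over two successive constrained choices, first $c_1\in A(a_1+a_2)$ and then $\beta_2\in A(a_1+c_1)$, each of whose size is directly governed by hypothesis~(\ref{eq:2}) of Theorem~\ref{thm:unstructured}.
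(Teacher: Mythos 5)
Your proposal is correct and uses the same bijection $(\beta_1,\beta_2)\mapsto(a_1+\beta_1+\beta_2,\beta_2)$ as the paper's proof, with the same verification in both directions. The remark at the end about why the lemma is useful in bounding $\E[Y^2]$ also matches the paper's intent.
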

\begin{proof}
We show that the map 
$(\beta_1,\beta_2) \mapsto (a_1+\beta_1+\beta_2,\beta_2)$ is a bijection 
of the two sets.
Clearly, the map is bijective as a map in $\bbF_2^{*}$.

Now, suppose that $(\beta_1,\beta_2)$ is in the first set.
Let $c_1 := a_1+\beta_1+\beta_2$, and also set $c_2 := a_2+\beta_1+\beta_2$.
We see that both $c_1$ and $c_2$ are in $A$ (because $a_1 \in A({\beta_1+\beta_2})$, and
$a_2 \in A({\beta_1+\beta_2})$).
Since also $c_1+c_2 = a_1+a_2$ we see that $c_1 \in A({a_1+a_2})$.
Furthermore, because $\beta_1+\beta_2 = a_1+c_1$ we see that 
$\beta_2 \in A({a_1+c_1})$.
Thus, $(a_1+\beta_1+\beta_2,\beta_2)$ is in the second set.

Next, suppose that $(c_1,\beta_2)$ is in the second set.
Let $\beta_1 := a_1+\beta_2+c_1$, so that the preimage of
$(c_1,\beta_2)$ is $(\beta_1,\beta_2)$.
We first notice that $\beta_1 = a_1+\beta_2+c_1 \in A$ 
because $\beta_2 \in A({a_1+c_1})$.
Next, we clearly have $a_1 \in A({\beta_1+\beta_2}) = A({a_1+c_1})$.
Finally, we also have $a_2 \in A({a_1+c_1})$: this is
the same condition as $c_1 \in A({a_1+a_2})$.
\end{proof}
Continuing our proof sketch of Lemma~\ref{lem:YsquareIsSmall}, 
we suppose for a moment that $|A(s)| \leq L|A|/K$ 
for all $s \in 2A$.
Then, the set on the right hand side in (\ref{eq:222}) is clearly at 
most of cardinality $L^2|A|^2/K^2$, and by (\ref{eq:4})
we see that $\E[Y^2] \leq L^4|A|^2/K^4$, as we want to show.

Thus, all which remains to do is to show that in case $|A(s)| \leq L|A|/K$
with probability $1-K^{-8}$ we can still give the bound; we have chosen
the $8$ in the exponent to make this possible.\footnote{We remark that
the assumption 
$\forall s: |A(s)| \leq L|A|/K$ does not really make sense, because 
$A(0) = A$.}
\begin{proof}[Proof of Lemma~\ref{lem:YsquareIsSmall}]
We start with
\begin{align}\label{eq:5}
\E[Y^2] = \sum_{a_1,a_2} \Pr[\{a_1,a_2\} \subseteq A({\beta_1+\beta_2})]^2
\end{align}
and
\begin{align}
\Pr[\{a_1,a_2\} \subseteq A({\beta_1+\beta_2})]
=
\frac{1}{|A|^2} \sum_{\beta_1,\beta_2 \in A} \<\{a_1,a_2\} \subseteq A({\beta_1+\beta_2})\}\>\;.
\end{align}
By Lemma~\ref{lem:keyStep}, 
\begin{align}
\Pr[\{a_1,a_2\} \subseteq A({\beta_1+\beta_2})]
=\frac{1}{|A|^2} \sum_{c_1 \in A({a_1+a_2})}|A({a_1+c_1})|\;.
\end{align}

For $a \in A$ let $B(a) := \{ b \in A : |A({a+b})| \geq L|A|/K\}$.
Then, we let $A'$ be the set of elements for which $|B(a)|$ is large.
\begin{align}
A'= \{ a :  |B(a)| \geq |A|/K^4 \}
\end{align}
Note that $|A'| \leq |A|/K^4$, as otherwise
$\Pr_{a_1,a_2}[|A({a_1+a_2})| \geq L|A|/K] \geq \frac{1}{K^{8}}$.

To now upper bound the right hand side of (\ref{eq:5}),
we first fix $a_1 \in A \setminus A'$
and $a_2 \in A \setminus B(a_1)$.
Then, we have
\begin{align}
\lefteqn{|A|^4 \cdot \Pr[\{a_1,a_2\} \subseteq A({\beta_1+\beta_2})]^2}\\
&=
\Bigl(
\sum_{c_1 \in A({a_1+a_2}) \cap B(a_1)} |A({a_1+c_1})|
+ \sum_{c_1 \in A({a_1+a_2}) \setminus B(a_1)} |A({a_1+c_1})|
\Bigr)^2\\
&\leq
\Bigl(
\sum_{c_1 \in B(a_1)} |A|
+ \sum_{c_1 \in A({a_1+a_2})} \frac{L|A|}{K}
\Bigr)^2\\
&\leq
\Bigl(
|B(a_1)| |A|
+ |A({a_1+a_{2}})| \frac{L|A|}{K}
\Bigr)^2\\
&\leq
\Bigl(
\frac{|A|^2}{K^4}
+  \frac{L^2|A|^2}{K^2}
\Bigr)^2 \leq \frac{4L^4|A|^4}{K^4}\;.
\end{align}

Thus, 
\begin{align}
\sum_{a_1,a_2 \in A} \Pr[&\{a_1,a_2\} \subseteq A({\beta_1+\beta_2})]^2\\
&= 
\sum_{a_1 \notin A', a_2 \notin B(a_1)} 4\frac{L^{4}}{K^4}
+
\sum_{a_1 \notin A', a_2 \in B(a_1)} 1
+
\sum_{a_1 \notin A', a_2 \in A} 1\\
&\leq
4|A|^2\frac{L^4}{K^4} + |A|^2 \frac{1}{K^4} + 
|A|^2 \frac{1}{K^4}
\\
&\leq
6|A|^2\frac{L^4}{K^4}\;.
\end{align}
\end{proof}

We can now prove Theorem~\ref{thm:unstructured}.
\begin{proof}[Proof of Theorem~\ref{thm:unstructured}]
Let $C \subseteq 2A$ be the elements $s$ of $2A$
for which $A(s)$ has ``typical'' size, 
formally $C := \{s :  \frac{|A|}{2K} \leq |A(s)| \leq L \frac{|A|}{K}\}$.
The total number of pairs 
$(a_1,a_2) \in A\times A$ is at most
$\frac{|2A|}{K^8} \cdot |A| + 
|C|\cdot L\frac{|A|}{K} + |2A|\cdot \frac{|A|}{2K}
=
|A|^2 \Bigl(\frac{1}{K^7} + L\frac{|C|}{|2A|} + \frac{1}{2}\Bigr)$.
Because the number of pairs equals $|A|^2$, and we can
assume $K \geq 2$ (otherwise the theorem is easily seen to hold),
we have $L\frac{|C|}{|2A|} + \frac{3}{4} \geq 1$, i.e.,
$\frac{|C|}{|2A|} \geq \frac{1}{4L}.$

By Lemmas~\ref{lem:lowerBoundOnZ} and \ref{lem:YsquareIsSmall}
we have
\begin{align}\label{eq:6}
\Pr[Z >0] 
\geq \frac{(E[Z])^2}{E[Z^2]} 
\geq  \frac{(E[Z])^2}{E[Y^2]} \geq
\Omega\Bigl(\frac{1}{L^4}\Bigr)\;.
\end{align}

We want to count the number of
tuples $(\beta_1,\beta_2,\beta_3,\beta_4)$ such that 
$\beta_1+\beta_2+\beta_3+\beta_4 \in 2A$, and further 
$\beta_1+\beta_2 \in C$ and $\beta_3+\beta_4 \in C$.
By (\ref{eq:6}) there are at least
$\Omega(\frac{1}{L^4}) |A|^4$ tuples with 
$\beta_1+\beta_2+\beta_3+\beta_4 \in 2A$ and 
$|A(\beta_1+\beta_2)|, |A(\beta_1+\beta_2)| \geq \frac{|A|}{2K}$.
Of those, at most $\frac{|A|^4}{K^8}$ also satisfy 
$|A(\beta_1+\beta_2)| \geq \frac{L|A|}{K}$, and so 
there are still at least $\Omega(\frac{1}{L^4}) |A|^4$ tuples
with $\beta_1+\beta_2+\beta_3+\beta_4 \in 2A$, 
$\beta_1+\beta_2 \in C$, and $\beta_3+\beta_4 \in C$ (we can assume
that $L \leq K/2$, as otherwise the theorem is trivial).

This implies that there are at least $\Omega(\frac{1}{L^6})
K^2 |A|^2$ pairs  $(\gamma_1,\gamma_2) \in C \times C$ which
satisfy $\gamma_1 + \gamma_2 \in 2A$: each such pair can be
split into at most $\bigl(\frac{L|A|}{K}\bigr)^2$ four-tuples.

Thus, if we pick $\gamma_1$ and $\gamma_2$ uniformly in $C$
we have
\begin{align}
\Pr[\gamma_1 + \gamma_2 \in 2A] \geq \Omega\Bigl(\frac{1}{L^{6}}\Bigr)\;.
\end{align}

By the Balog-Szemer\'edi-Gowers Theorem (Theorem~\ref{thm:bsg}) 
applied on the set 
$A_{\mathrm{Thm\ref{thm:bsg}}}=C$ and $S_{\mathrm{Thm\ref{thm:bsg}}}=2A$
we get $C' \subseteq C$ such that $|C'| \geq |C|/\poly(L)$ and 
$|C'+C'| \leq \poly(L)|C|$.
By the Freiman-Ruzsa Theorem we get that $\sspan(C') \leq |C| 2^{\poly(L)}$.


Consider now the graph $G$ whose vertices are the elements
of $A$, and where we $a_1$ and $a_2$ is connected by an edge iff 
$a_1+a_2 \in C'$.
If two vertices $a_1$ and $a_2$ are in the same connected component 
of this graph, then $a_1+a_2$ is in $\sspan(C')$: $a_1+a_2$ equals
the sum of the elements in $C'$ which
correspond to the edges of a path from $a_1$ to $a_2$.

The graph has at least $|A|^2/\poly(L)$ edges, because each element
in $C'$ gives rise to at least $\frac{|A|}{2K}$ edges, and $|C'| \geq
\frac{K|A|}{\poly(L)}$.
Thus, the graph must have a connected component with at least $|A|/\poly(L)$ 
vertices (if the largest connected component takes 
fraction $\delta$ of the vertices, we get
at most $\delta^2 \cdot 1/\delta$ fraction of the edges).
Let $A' \subseteq A$ be the elements of such a component.
Because $2A' \subseteq \sspan(C')$ we also have $\sspan(2A') \subseteq
\sspan(\sspan(C')) = \sspan(C')$.
Furthermore, for every fixed $a' \in A'$ we  have
$\sspan(A') = \sspan(\{a'\} \cup 2A') \leq 2 \cdot |\sspan(2A')| \leq 
2\cdot |\sspan(C')| \leq |C|\cdot 2^{\poly(L)}$.
\end{proof}
\section{The Structured Case}

\begin{theorem}\label{thm:structured}
Let $A \subseteq \bbF_2^n$, $|A+A|= K|A|$.
Suppose that there is an element $a^*$, $L \in \bbN$,
and  $\epsilon > 10/\log(K)$
such that the set
\begin{align}
B = \{b \in A : |A({a^*+b}) | > |A|(1-K^{-\epsilon})\}
\end{align}
has size at least $K^{-L} |A|$.
Then, $B$ has span at most $2^{(4^{L/\epsilon})}|A+A|$.
\end{theorem}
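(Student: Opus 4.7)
The plan is to work with the translated set $T := a^* + B$ and exploit the fact that $B$ (equivalently $T$) consists of approximate periods of $A$. Since $|A(a^* + a^*)| = |A(0)| = |A| > (1-K^{-\epsilon})|A|$, we have $a^* \in B$ and hence $0 \in T$. For each $t \in T$ the hypothesis translates to $|A \triangle (A+t)| < 2K^{-\epsilon}|A|$, so by the triangle inequality for symmetric differences any sum $t_1 + \cdots + t_k$ of elements of $T$ is a $(kK^{-\epsilon})$-approximate period of $A$; that is, $|A(t_1 + \cdots + t_k)| > (1-kK^{-\epsilon})|A|$.

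The key quantitative consequence is a uniform bound on iterated sumsets. Take $k_0 := K^\epsilon/2$. For $k \leq k_0$, every $s \in kT$ satisfies $|A(s)| \geq |A|/2$, and since $kT \subseteq 2A$, combining with the identity $\sum_{s \in 2A}|A(s)| = |A|^2$ from the preliminaries gives
\begin{equation*}
|kT| \leq 2|A| \qquad \text{for every } k \leq k_0.
\end{equation*}
This is the heart of the argument: near-periodicity forces all the sumsets $kT$ (for a long range of $k$) to have size comparable to $|A|$, which is much smaller than $|2A| = K|A|$.

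Next I would apply pigeonhole to the nondecreasing chain $|T| \leq |2T| \leq |4T| \leq \cdots \leq |2^m T|$ (nondecreasing because $0 \in T$), where $m := \lfloor \log_2 k_0 \rfloor$. Its total multiplicative growth is at most $2|A|/|T| \leq 2K^L$, and the hypothesis $\epsilon > 10/\log K$ ensures $m = \Omega(\epsilon \log K)$ with substantial additive slack. The geometric-mean pigeonhole then produces $j^*$ satisfying $|2^{j^*+1}T|/|2^{j^*}T| \leq (2K^L)^{1/m} \leq 2^{2L/\epsilon}$. Setting $C := 2^{j^*}T$, this gives $|2C| \leq 2^{2L/\epsilon}|C|$, and the Freiman--Ruzsa theorem yields $\sspan(C) \leq 2^{2 \cdot 2^{2L/\epsilon}}|C|$.

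Finally, $T \subseteq C \subseteq \sspan(T)$ (using $0 \in T$) forces $\sspan(T) = \sspan(C)$, and since $B = a^* + T$ with $a^* \in B$, the spans $\sspan(B)$ and $\sspan(T)$ differ by at most a factor of $2$. Combining these with $|C| \leq 2|A|$ and $|A| \leq |A+A|$ produces the target $\sspan(B) \leq 2^{4^{L/\epsilon}}|A+A|$, after absorbing the additive constants in the exponent using the slack supplied by $\epsilon > 10/\log K$ and the factor $K$ hidden in $|A+A| = K|A|$. The main obstacle is chasing these constants through the double exponential in Freiman--Ruzsa so that the final exponent fits under $4^{L/\epsilon}$; conceptually, however, the entire argument rests on the uniform sumset bound $|kT| \leq 2|A|$ obtained from the representation identity.
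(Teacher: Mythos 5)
Your proof is correct and follows the same overall strategy as the paper's: exploit the triangle inequality for symmetric differences to show that sums of near-periods of $A$ remain near-periods, build a geometric chain of sets whose total multiplicative growth is bounded, pigeonhole to find a rung with small doubling, apply Freiman--Ruzsa, and then pass back to $\sspan(B)$ via a one-step translate. The difference is in the choice of chain: the paper uses the sublevel sets $S(\delta) = \{s \in A+A : |A(s)| > (1-\delta)|A|\}$, noting $B+B \subseteq S(2K^{-\epsilon})$ and $|S(\delta)| \leq |A+A| = K|A|$, while you iterate sumsets of the translate $T = a^* + B$ directly and bound $|kT| \leq 2|A|$ via the representation identity $\sum_{s\in 2A}|A(s)| = |A|^2$. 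Your size bound is tighter by roughly a factor of $K$, so you have a bit more room in the constants (both writeups are anyway loose with the final exponent: the Freiman--Ruzsa step naturally produces $2^{2\cdot 4^{L/\epsilon}}$ rather than $2^{4^{L/\epsilon}}$, but the intended conclusion is $K^{O_L(1)}$ so this is immaterial). One small point to tighten: your step ``$a^* \in B$ hence $0 \in T$'' uses $a^* \in A$, which the theorem statement does not explicitly require; if $a^*$ is allowed to be arbitrary, simply translate by some $b_0 \in B$ instead of $a^*$, paying a harmless factor-of-two loss in the near-period parameter, and the rest of the argument goes through unchanged.
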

\begin{proof}
Define
\begin{align}
S(\delta) := \{s \in A+A: |A(s)| > |A|(1-\delta)\}\;.
\end{align}
Clearly, $S(1) = |A+A|$. 
Furthermore, we see that $B+B \subseteq S(2K^{-\epsilon})$,
and also $S(\delta)+S(\delta) \subseteq S(2\delta)$.

Consider the sequence $2K^{-\epsilon}, 2\cdot2K^{-\epsilon}, 4\cdot2K^{-\epsilon},
\ldots, 2^{r}K^{-\epsilon}$ 
where $r$ is chosen such that the last term is $(\frac{1}{2},1]$.
We see that 
$r = \lceil\log_2(K^\epsilon/2)\rceil \geq \frac{\epsilon\log_2(K)}{2}$.

Since $K^{-L}|A| \leq |S(2K^{-\epsilon})|$ and
$|S(2^{r}K^{-\epsilon})| 
\leq |A+A| \leq K|A|$, there is some $j \in \{1,\ldots,r-1\}$ 
for which 
\begin{align}
|S(2^{j+1} K^{-L})| \leq K^{(L+1)/(r-1)}
|S(2^{j}K^{-L})|\;,
\end{align}
which implies that $S(2^j K^{-L})$ has doubling constant 
at most $K^{(L+1)/(r-1)}$.

Since $K^{(L+1)/(r-1)} \leq 2^{2L\log_2(K)/\epsilon\log_2(K)} = 4^{L/\epsilon}$,
by the Freiman-Ruzsa theorem, $S(2^{j}K^{-L})$ 
has span at most $2^{2\cdot 4^{L/\epsilon}}|S(2^j K^{-L})| \leq
2^{2\cdot 4^{L/\epsilon}}|A+A|$.
This in particular means that $2B$ has at most this span.
Since for any $b \in B$ we have that
$b+B \subseteq 2B$, we see that $\sspan(B) \leq 2 \sspan(2B)$, and 
so we get the result.
\end{proof}
Finally, we note that the theorem implies the statement 
given by item (2) in the abstract:
since $\Pr_{a_1,a_2 \in A}[|A(a_1+a_2)| \geq |A|(1-K^{-1/L})] \geq
K^{-L-1}$, there must be some $a_1$ for which 
$\Pr_{a_2\in A}[|A(a_1+a_2)| \geq |A|(1-K^{-1/L})] \geq K^{-L-1}$.
Applying Theorem~\ref{thm:structured} gives the claim, as long
as $L$ is smaller than some function of $K$, which (using
the Freiman-Ruzsa Theorem again) is clearly sufficient.

\section*{Acknowledgements}
I would like to thank Eli Ben-Sasson and Noga Ron-Zewi for inspiring
discussions and helpful pointers.
\bibliographystyle{alpha}
\bibliography{bibliography}

\end{document}